\newtheorem{proposition}{Proposition}
\newtheorem{theorem}{Theorem}
\algrenewcommand\algorithmicrequire{\textbf{Input:}}
\algrenewcommand\algorithmicensure{\textbf{Output:}}
\lstdefinelanguage{chapel}
  {
    morekeywords=[1]{
      align, atomic,
      begin, bool, break, by,
      class, cobegin, coforall, complex, config, const, continue,
      delete, dmapped, do, domain,
      else, enum, extern, export,
      for, forall,
      if, imag, in, index, inline, inout, iter,
      label, lambda, let, local, locale,
      module,
      new, nil, noinit,
      on, opaque, otherwise, out,
      param, proc,
      range, real, record, reduce, ref, return,
      scan, select, serial, single, sparse, string, subdomain, sync,
      then, type,
      union, use,
      when, where, while, with,
      yield,
      zip
    },
    morekeywords=[2]{
        int,
        owned,
        real,ref,
        true,
        uint,
        var
    },
    sensitive=true,
    mathescape=true,
    morecomment=[l]{//},
    morecomment=[s]{/*}{*/},
    morestring=[b]",
    literate=*
    {(}{{\textcolor{blue}{(}}}{1}
    {)}{{\textcolor{blue}{)}}}{1}
    {\{}{{\textcolor{blue}{\{}}}{1}
    {\}}{{\textcolor{blue}{\}}}}{1} 
    {()}{{\textcolor{blue}{()}}}{2}
}
\lstdefinestyle{mystyle}{
    backgroundcolor=\color{backcolour},   
    commentstyle=\color{codegreen},
    keywordstyle=[1]\color{magenta},
    keywordstyle=[2]\color{blue},
    numberstyle=\tiny\color{codegray},
    stringstyle=\color{codepurple},
    basicstyle=\ttfamily\footnotesize,
    breakatwhitespace=false,
    captionpos=b,
    keepspaces=true,
    numbers=left,
    numbersep=6pt,
    showspaces=false,
    showstringspaces=false,
    showtabs=false,
    tabsize=2,
    frame=single,
    frameround=tttt,
    xleftmargin=3.4pt,
    xrightmargin=3.4pt
}
\definecolor{codegreen}{rgb}{0,0.6,0}
\definecolor{codegray}{rgb}{0.5,0.5,0.5}
\definecolor{codepurple}{rgb}{0.58,0,0.82}
\definecolor{backcolour}{rgb}{0.95,0.95,0.92}
\def\BibTeX{{\rm B\kern-.05em{\sc i\kern-.025em b}\kern-.08em
    T\kern-.1667em\lower.7ex\hbox{E}\kern-.125emX}}
\newcommand{\linebreakand}{%
  \end{@IEEEauthorhalign}
  \hfill\mbox{}\par
  \mbox{}\hfill\begin{@IEEEauthorhalign}
}
\begin{document}
%
\title{HiPerMotif: Novel Parallel Subgraph Isomorphism in Large-Scale Property Graphs}

\author{
    \author{
        \IEEEauthorblockN{
            Mohammad Dindoost,
            Oliver Alvarado Rodriguez, Bartosz Bryg,
            Ioannis Koutis, and
            David A. Bader
        }
        \IEEEauthorblockA{\textit{Department of Computer Science} \\
        \textit{New Jersey Institute of Technology}\\
                Newark, NJ, USA \\
                \texttt{\{md724,oaa9,bb474,ikoutis,bader\}@njit.edu}
        }
    }
}


%


\maketitle

\begin{abstract}
Subgraph isomorphism, essential for pattern detection in large-scale graphs, faces scalability challenges in attribute-rich property graphs used in neuroscience, systems biology, and social network analysis. Traditional algorithms explore search spaces vertex-by-vertex from empty mappings, leading to extensive early-stage exploration with limited pruning opportunities. We introduce HiPerMotif, a novel hybrid parallel algorithm that fundamentally shifts the search initialization strategy. After structurally reordering the pattern graph to prioritize high-degree vertices, HiPerMotif systematically identifies all possible mappings for the first edge (vertices 0,1) in the target graph, validates these edge candidates using efficient vertex and edge validators, and injects the validated partial mappings as states at depth 2. The algorithm then continues with traditional vertex-by-vertex exploration from these pre-validated starting points, effectively pruning the expensive early search tree branches while enabling natural parallelization over edge candidates. Our contributions include the edge-centric initialization paradigm with state injection, a structural reordering strategy achieving up to 5× speedup, rapid edge and vertex validators for attribute-rich graphs, and efficient parallel enumeration over target graph edges. Implemented in the open-source Arachne framework, HiPerMotif achieves up to 66× speedup over state-of-the-art baselines (VF2-PS, VF3P, Glasgow) on diverse datasets where baselines successfully complete execution. Additionally, HiPerMotif successfully processes massive datasets such as the H01 connectome with 147 million edges, which existing methods cannot handle due to memory constraints. Comprehensive evaluation across synthetic and real-world graphs demonstrates HiPerMotif's scalability, enabling advanced analysis in computational neuroscience and beyond.
\end{abstract}

\section{Introduction}

Subgraph isomorphism and monomorphism, critical for detecting patterns in large-scale graphs, underpin discoveries in neuroscience~\cite{sporns2018graph,van2013wu}, systems biology~\cite{mangan2003structure,barabasi2011network}, social network analysis~\cite{ugander2011anatomy,fire2016organization}, and cybersecurity~\cite{akoglu2015graph,noble2003graph}. Recurring substructures, or motifs, reveal network organization and function through tasks like motif detection and network similarity analysis~\cite{alon2007network,lauri2011subgraphs}. Analyzing massive graphs with millions of edges demands scalable, expressive tools. Traditional adjacency-based representations fail to capture domain-specific semantics (e.g., neuron types, protein interactions), necessitating property graphs~\cite{angles2018property,rodriguez2010property} that support multiple vertex and edge properties, significantly increasing computational complexity.

Traditional subgraph matching algorithms follow a vertex-centric approach, starting from empty mappings and incrementally building partial solutions vertex-by-vertex through recursive backtracking. This strategy requires extensive exploration of early search tree levels, where pruning opportunities are limited and parallel efficiency is constrained by sequential vertex ordering decisions. Existing algorithms struggle with five fundamental challenges: (1) inefficient candidate generation producing large search spaces; (2) rigid vertex-ordering heuristics that overlook graph structure; (3) high memory overhead in tracking candidate sets; (4) limited parallelization opportunities in early tree-search stages; and (5) unnecessary exhaustive enumeration despite applications often needing partial results. These limitations become critical in applications like connectome motif detection, where graphs exceed 100 million edges.

We propose HiPerMotif, a hybrid algorithm that fundamentally changes the search initialization strategy while preserving the benefits of established tree-search methods. After structurally reordering the pattern graph to prioritize high-degree vertices, HiPerMotif systematically identifies all possible mappings for the first edge (vertices 0,1) in the target graph. Instead of exploring vertex assignments sequentially from empty states, the algorithm validates these edge candidates using efficient vertex and edge validators, then injects the validated partial mappings as states at depth 2 into traditional vertex-by-vertex search. This approach eliminates the expensive early search tree exploration (depths 0-1) while enabling natural parallelization over edge candidates rather than search tree branches. Implemented in the open-source Arachne framework, HiPerMotif achieves up to 66$\times$ speedup over state-of-the-art baselines (VF2-PS, VF3P, Glasgow) and successfully processes massive datasets, such as the H01 connectome~\cite{shapson2024petavoxel}, that cause existing methods to fail due to memory constraints. The main contributions of this paper are:
\begin{enumerate}
    \item A hybrid algorithmic paradigm that systematically validates all edge mappings for the first edge and injects partial states at depth 2, eliminating costly early search tree exploration while preserving traditional vertex-by-vertex efficiency.
    \item A structural reordering strategy that prioritizes high-degree vertices to identify the optimal starting edge, achieving up to 5$\times$ speedup in search initiation.
    \item Efficient vertex and edge validators that rapidly prune infeasible edge candidates before state injection, reducing computational overhead in attribute-rich property graphs.
    \item A parallel edge enumeration framework within the open-source Arachne framework, enabling scalable analysis of previously intractable large-scale property graphs.
\end{enumerate}

Comprehensive evaluation on synthetic and real-world graphs demonstrates HiPerMotif's scalability across diverse domains, from computational neuroscience to social network analysis. The rest of this paper is organized as follows: Section \ref{sec:background} reviews related work, Section \ref{sec:hipermotif} presents HiPerMotif algorithms and Arachne implementation, Section \ref{sec:experiments} evaluates performance on synthetic and real-world graphs, and Section \ref{sec:conclusion} concludes with future directions.

%
\IEEEpeerreviewmaketitle

\section{Background and Preliminaries} \label{sec:background}
\subsection{Subgraph Isomorphism and Monomorphism}
Subgraph isomorphism and monomorphism are fundamental problems in graph theory, with applications in neuroscience, systems biology, social network analysis, and cybersecurity~\cite{fan2012graph,quamar2016nscale,akoglu2015graph}. Given graphs \( G_1 = (V_1, E_1) \) and \( G_2 = (V_2, E_2) \) with vertex labels \(\alpha_i: V_i \rightarrow L_V\) and edge labels \(\beta_i: E_i \rightarrow L_E\), subgraph isomorphism seeks an injective function \( f: V_1 \rightarrow V_2 \) such that:
\begin{alignat}{2}
\forall u, v \in V_1 & :\ (u, v) \in E_1 &&\Leftrightarrow (f(u), f(v)) \in E_2, \\
\forall v \in V_1 & :\ \alpha_1(v) &&= \alpha_2(f(v)), \\
\forall (u, v) \in E_1 & :\ \beta_1(u, v) &&= \beta_2(f(u), f(v)).
\end{alignat}

Subgraph monomorphism relaxes the edge constraint to \( (u, v) \in E_1 \Rightarrow (f(u), f(v)) \in E_2 \). Both problems are NP-complete, with worst-case complexity \( O(|V_2|^{|V_1|}) \)~\cite{cook2023complexity}. Property graphs~\cite{angles2018property}, supporting vertex and edge attributes, increase computational demands.

\subsection{Tree-Search and Parallel Algorithms}
Tree-search algorithms dominate exact subgraph matching, using recursive backtracking to explore partial vertex mappings~\cite{ullmann1976algorithm}. VF2 introduced frontier sets for pruning infeasible branches~\cite{cordella2004sub}, while VF3 improved data structures for dense graphs~\cite{carletti2017challenging}. The Glasgow Subgraph Solver employs constraint programming and conflict-directed backjumping~\cite{mccreesh2020glasgow}, and LAD uses arc consistency preprocessing~\cite{solnon2010alldifferent}. Recent methods like DP-iso~\cite{han2019efficient} and VEQ~\cite{kim2021versatile} optimize backtracking with failing sets and dynamic equivalence classes, respectively.

Variable ordering significantly impacts tree-search efficiency~\cite{zhang2024comprehensive}. Static strategies, like RI’s high-degree vertex selection~\cite{bonnici2013ri}, precompute sequences, while dynamic methods, like DP-iso, adapt during search~\cite{han2019efficient}. Techniques such as CFL’s core-forest-leaf decomposition enhance pruning by delaying leaf matching~\cite{bi2016efficient}.

Parallel implementations address scalability on multi-core architectures. VF3P uses state cloning for near-linear scaling up to 16 cores~\cite{carletti2019parallel}, while SLF’s “Ask for Sharing” and “Low-depth Priority Sharing” reduce synchronization overhead~\cite{liang2023slf}. Glasgow integrates parallel backjumping with low-overhead synchronization~\cite{mccreesh2020glasgow}. Dense graphs benefit more from parallelization due to larger search trees, but synchronization limits scalability beyond moderate core counts~\cite{carletti2019parallel}.

Index-based methods (e.g., FG-index~\cite{cheng2007fg}) preprocess graphs for rapid filtering but incur high memory overhead, limiting scalability for large, attribute-rich graphs~\cite{katsarou2017hybrid}. GPU-accelerated approaches like GSI~\cite{zeng2020gsi} leverage parallelism but struggle with irregular memory access~\cite{tran2015fast}.

\subsection{VF2-PS: Our Enhanced Baseline}
Building upon VF2's foundation, our previous VF2-PS algorithm \cite{dindoost2024vf2} introduced parallel execution capabilities while maintaining comprehensive attribute support for both vertex and edge labels. VF2-PS addresses scalability limitations of sequential VF2 through thread-safe state cloning and parallel-safe data structures that minimize synchronization overhead while preserving pruning effectiveness.
The algorithm maintains VF2's sophisticated state-space representation where each state consists of the current partial mapping $M$ and frontier sets $(T_{\text{in}}^1, T_{\text{out}}^1, T_{\text{in}}^2, T_{\text{out}}^2)$ representing the boundary between mapped and unmapped vertices. Parallel execution is achieved through independent exploration of different search branches, with careful state management ensuring thread safety without compromising algorithmic correctness.
While VF2-PS demonstrates significant improvements over sequential VF2, fundamental limitations persist that motivate the innovations in HiPerMotif. The algorithm lacks explicit domain tracking and constraint propagation mechanisms found in advanced systems like Glasgow, resulting in redundant exploration of infeasible branches.

\section{HIPERMOTIF ALGORITHM}
\label{sec:hipermotif}
HiPerMotif addresses subgraph isomorphism through a hybrid approach that fundamentally changes the search initialization strategy. Instead of traditional vertex-by-vertex exploration from empty states, HiPerMotif first systematically identifies all valid mappings for the first edge in the reordered pattern graph, validates these candidates efficiently, and injects them as partial states at depth 2 into traditional tree-search. This eliminates expensive early search tree exploration while enabling natural parallelization over edge candidates.

Figure~\ref{fig:workflow} illustrates the complete workflow: structural reordering prioritizes high-degree vertices, systematic edge enumeration identifies all possible first edge mappings, efficient validators prune infeasible candidates, and state injection initializes traditional matching at depth 2.

\subsection{Structural Reordering}
We reorder the pattern graph \( G_1 \) to prioritize high-degree vertices, enhancing pruning in subsequent matching. A permutation \(\pi: V_1 \to \{0, 1, \dots, |V_1|-1\}\) reindexes vertices based on a ranking function \(\sigma(v) = (\text{totaldeg}(v), \text{outdeg}(v))\), where \(\text{totaldeg}(v)\) is the sum of in- and out-degrees, and ties are broken by \(\text{outdeg}(v)\). Algorithm~\ref{alg:structural-reordering} iteratively selects the highest-ranked vertex \( v^* \), swaps it to the first position, and updates edge arrays \(\texttt{src}\) and \(\texttt{dst}\). Subsequent vertices are chosen from unplaced out-neighbors of the last placed vertex, maximizing \(\sigma(w)\), or from remaining vertices if no out-neighbors exist. This preserves a path-like structure, improving matching efficiency.

\begin{algorithm}[htbp]
\caption{Structural Reordering}
\label{alg:structural-reordering}
\begin{algorithmic}[1]
    \Procedure{\texttt{StructuralReorder}}{\texttt{src}, \texttt{dst}}
        \State Compute \(\text{deg}(v)\) for all \( v \in V \)
        \State \(\mathcal{R} \gets \varnothing\)
        \State \( v^* \gets \underset{v \in V}{\arg\max} \sigma(v) \)
        \State \Call{Swap}{$v^*, \pi(1)$}
        \State \(\mathcal{R} \gets \mathcal{R} \cup \{ v^* \}\)
        \While{\( |\mathcal{R}| < |V| \)}
            \State \( u \gets \pi(|\mathcal{R}|) \)
            \State Update \(\text{indeg}, \text{outdeg}, \text{totaldeg}\)
            \State \( N^+(u) \gets \{ w \notin \mathcal{R} \mid (u \to w) \in E \} \)
            \If{\( N^+(u) \neq \varnothing \)}
                \State \( w^* \gets \underset{w \in N^+(u)}{\arg\max} \sigma(w) \)
            \Else
                \State \( w^* \gets \underset{w \in V \setminus \mathcal{R}}{\arg\max} \sigma(w) \)
            \EndIf
            \State \Call{Swap}{$w^*, \pi(|\mathcal{R}| + 1)$}
            \State \(\mathcal{R} \gets \mathcal{R} \cup \{ w^* \}\)
        \EndWhile
        \State \Return \( (\texttt{src}, \texttt{dst}, \pi) \)
    \EndProcedure
\end{algorithmic}
\end{algorithm}

The algorithm produces a valid permutation, preserving graph structure, with time complexity \( O(|V_1|^2) \) and space complexity \( O(|V_1| +  |E_1|) \). This quadratic preprocessing cost is negligible since \( |V_1| \ll |V_2| \) in practical subgraph matching scenarios, where pattern graphs typically represent small motifs while target graphs may contain millions of vertices.

\subsection{Matching-optimal Viable Edge Selection and Validation}
After structural reordering, the first edge \((0, 1)\) connecting the highest-degree vertices becomes the Matching-optimal Viable Edge (MVE) - the optimal starting point for edge-centric initialization due to its high connectivity and pruning potential. MVE selection systematically identifies all edges in the target graph \( G_2 = (V_2, E_2) \) that can potentially map to this first edge \((0, 1)\) in the reordered \( G_1 \):
\[
E_2^* = \{ (u, v) \in E_2 \mid \texttt{checkAttributes}((u, v), (0, 1)) \},
\]
where \(\texttt{checkAttributes}\) verifies vertex and edge attribute compatibility. Rather than exploring all possible vertex assignments sequentially, this approach directly targets promising edge mappings. Two efficient validation mechanisms ensure early pruning:

\begin{proposition}[Correctness of Structural Reordering]
Algorithm~\ref{alg:structural-reordering} produces a valid permutation \(\pi: V_1 \to \{0, 1, \dots, |V_1|-1\}\) that preserves all vertex and edge information from the original graph \( G_1 \).
\end{proposition}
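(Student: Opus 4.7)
The plan is to split the statement into two separable pieces: (i) the final $\pi$ is a bijection from $V_1$ to $\{0,1,\dots,|V_1|-1\}$, and (ii) the edge-array updates performed during each swap are consistent with $\pi$, so that the reindexed graph is isomorphic to $G_1$ under $\pi$ with all vertex and edge attributes intact. Both parts are established by induction on the outer while loop, using a single loop invariant that tracks the partial permutation built so far.

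For (i), I would adopt the invariant that after $k$ placements, $|\mathcal{R}|=k$, the restriction of $\pi$ to $\mathcal{R}$ bijects $\mathcal{R}$ onto the first $k$ positions, and the restriction of $\pi$ to $V_1 \setminus \mathcal{R}$ bijects the remainder onto the remaining positions. Starting from the identity permutation, the initial swap of $v^*$ into the first position exchanges two images of a bijection and hence preserves bijectivity; the same observation handles each subsequent iteration, since every \textsc{Swap} is a transposition of two positions in the current $\pi$. Termination and well-definedness of $w^*$ are immediate consequences: $|\mathcal{R}|$ increases by exactly one each pass, so the loop halts after $|V_1|-1$ iterations, and in the fallback branch $V_1 \setminus \mathcal{R}$ is non-empty precisely because the loop guard holds.

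For (ii), I would observe that updating $\texttt{src}$ and $\texttt{dst}$ during each swap is exactly the action of the transposition $(a\; b)$ applied entrywise to the edge-endpoint indices; no entries are added, removed, or duplicated. The final arrays therefore enumerate exactly $\{(\pi(u),\pi(v)) : (u,v) \in E_1\}$. Because the vertex labeling $\alpha_1$ and the edge labeling $\beta_1$ are tied to the underlying vertices and edges rather than to their numerical indices, they are carried through the relabeling untouched, so $G_1$ and its reindexed form are isomorphic via $\pi$; this is the precise sense in which ``all vertex and edge information is preserved.''

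The hard part will not be mathematical but notational: the pseudocode overloads $\pi$ as a mutable array and references position indices $\pi(1), \pi(|\mathcal{R}|), \pi(|\mathcal{R}|+1)$ even though the statement declares the range $\{0,\dots,|V_1|-1\}$. Writing a clean proof will require fixing an explicit representation of the partial permutation $\pi_k$ after $k$ swaps, and showing that the compound invariant ``the edge arrays currently encode the image of $E_1$ under $\pi_k$, and $\pi_k$ is bijective on the placed prefix'' is maintained by each loop body. Once that bookkeeping is nailed down, both claims follow without further work and the proposition is discharged.
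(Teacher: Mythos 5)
Your proposal is correct and follows essentially the same route as the paper's proof: both argue that $\mathcal{R}$ grows by exactly one vertex per iteration so each vertex receives a unique index, and that the \textsc{Swap} updates to \texttt{src} and \texttt{dst} preserve edge connectivity. You merely make the paper's informal walk-through rigorous by packaging it as an explicit loop invariant and observing that each swap is a transposition, which is a welcome tightening but not a different argument.
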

\begin{proof}
Algorithm~\ref{alg:structural-reordering} initializes set \(\mathcal{R}\) as empty (line 3). It selects the highest-ranked vertex \( v^* \) by \(\sigma(v) = (\text{totaldeg}(v), \text{outdeg}(v))\) (line 4), swapping it to position \(\pi(1)\) (line 5). In each iteration of the while loop (lines 7--18), a unique vertex \( w^* \) is chosen from unplaced out-neighbors or remaining vertices (lines 12--14), swapped to \(\pi(|\mathcal{R}| + 1)\) (line 16), and added to \(\mathcal{R}\) (line 17). The \(\texttt{Swap}\) operations (lines 5, 16) update \(\texttt{src}\) and \(\texttt{dst}\), preserving edge connectivity. Since \(\mathcal{R}\) grows by one vertex per iteration until \( |\mathcal{R}| = |V_1| \), each vertex receives a unique index in \(\{0, 1, \dots, n-1\}\), ensuring \(\pi\) is a valid permutation maintaining \( G_1 \)’s structure.
\end{proof}

\begin{theorem}[Complexity of Structural Reordering]
Algorithm~\ref{alg:structural-reordering} has time complexity \( O(|V_1|^2) \) and space complexity \( O(|V_1| + |E_1|) \), where \( n = |V_1| \).
\end{theorem}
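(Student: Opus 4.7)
The plan is to bound the time and space usage of Algorithm~\ref{alg:structural-reordering} by walking through its lines, charging each piece of work either to an iteration of the outer loop or to an amortized edge/vertex account, and then using $|E_1| \le |V_1|^2$ to collapse additive terms into a single $O(|V_1|^2)$ bound.

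For the time complexity, first I would observe that the initial degree computation on line~2 is a single pass over the adjacency structure and costs $O(|V_1| + |E_1|)$, and that the initial arg-max on line~4 costs $O(|V_1|)$. The while loop on lines~7--18 executes at most $|V_1| - 1$ times. Inside each iteration, the arg-max search over $N^+(u)$ or $V \setminus \mathcal{R}$ (lines~11--14) takes $O(|V_1|)$ via a linear scan over unplaced vertices, contributing $O(|V_1|^2)$ across the loop. The incremental update of $\text{indeg}$, $\text{outdeg}$, $\text{totaldeg}$ (line~9) touches only edges incident to the newly placed vertex, so summed across all iterations it contributes $O(|E_1|)$ total. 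The enumeration of $N^+(u)$ on line~10 is similarly bounded by $O(|E_1|)$ in aggregate. Summing these contributions gives $O(|V_1|^2 + |E_1|)$, which reduces to $O(|V_1|^2)$ because $|E_1| \le |V_1|^2$.

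For the space complexity I would enumerate the auxiliary structures: the edge arrays $\texttt{src}$ and $\texttt{dst}$ require $O(|E_1|)$ storage; the permutation $\pi$, the placement set $\mathcal{R}$, and the cached vectors $\text{indeg}$, $\text{outdeg}$, $\text{totaldeg}$ each use $O(|V_1|)$ space; every other variable is $O(1)$. Nothing introduced by the loop grows beyond these, so the overall space is $O(|V_1| + |E_1|)$.

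The main obstacle will be justifying the cost of \textsc{Swap} on lines~5 and~16. Read literally as a physical relabeling of every occurrence of two vertex IDs in $\texttt{src}$ and $\texttt{dst}$, each call would cost $\Theta(|E_1|)$ and the total would inflate to $O(|V_1| \cdot |E_1|)$, inconsistent with the claim. I would therefore argue that the swap is implemented as an $O(1)$ update to the permutation $\pi$ (with $\mathcal{R}$ and the degree caches indexed through $\pi$), and that $\texttt{src}$, $\texttt{dst}$ are materialized by a single $O(|E_1|)$ post-processing pass applying $\pi$ to every endpoint. Under this accounting the amortized per-swap cost is $O(1)$, the total edge-rewriting cost is absorbed into the $O(|E_1|) \subseteq O(|V_1|^2)$ term, and the stated bounds follow.
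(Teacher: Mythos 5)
Your proposal is correct and follows the same overall route as the paper's proof: a line-by-line accounting in which the per-iteration $O(|V_1|)$ arg-max scans dominate and give $O(|V_1|^2)$, the degree updates and neighbor enumerations are amortized to $O(|E_1|)$ over the whole run, and the space bound comes from enumerating $\pi$, $\mathcal{R}$, the degree caches, and the edge arrays. The one place you genuinely diverge is the cost of \textsc{Swap}. The paper charges each swap for updating the edges incident to the selected vertex $w^*$ and asserts the total is $O(|E_1|)$ because ``each edge is updated at most twice''; this implicitly assumes incident edges are found via the adjacency structure rather than by scanning the whole edge array, and it quietly ignores the edges of the vertex being displaced out of position $|\mathcal{R}|+1$ (which can be displaced more than once). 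You instead flag the naive $\Theta(|E_1|)$-per-call reading as inconsistent with the claimed bound and repair it by making each swap an $O(1)$ update to $\pi$, deferring the relabeling of \texttt{src} and \texttt{dst} to a single $O(|E_1|)$ pass. Your accounting is the cleaner of the two: it removes the ambiguity about what a swap physically touches and makes the $O(|V_1|^2 + |E_1|)$ total unconditional, whereas the paper's amortization argument only holds under a favorable (and unstated) implementation of \textsc{Swap}. Either way the stated bounds follow, so there is no gap in your argument.
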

\begin{proof}
Computing degree metrics (line 2) requires scanning all edges, taking \( O(|V_1| + |E_1|) \) time. The main loop (lines 7--18) executes \( |V_1|-1 \) iterations. In each iteration, identifying unplaced out-neighbors (line 10) takes \( O(\text{deg}(u)) \), and finding the maximum-ranked vertex (lines 12 or 14) requires \( O(|V_1|) \) time in the worst case, as \(\sigma(w)\) is computed for all unplaced vertices. The \(\texttt{Swap}\) operation (line 16) updates all edges involving \( w^* \), taking \( O(|E_1|) \) time across all iterations, as each edge is updated at most twice. Thus, the total time complexity is \( O(|V_1| \cdot |V_1| + |E_1|) = O(|V_1|^2 + |E_1|) \), dominated by \( O(n^2) \) for dense graphs. Space complexity includes the graph (\( O(|V_1| + |E_1|) \)), \(\mathcal{R}\) (\( O(|V_1|) \)), and auxiliary arrays (\( O(|V_1|) \)), yielding \( O(|V_1|+ |E_1|) \).
\end{proof}
\textbf{Vertex Validator} (Algorithm~\ref{alg:vertex-validator}): Identifies candidate vertices \( v \in V_2 \) for vertex 0 in \( G_1 \), checking attribute matches and degree constraints (\( T_{\text{in}}^v \geq T_{\text{in}}^0 \), \( T_{\text{out}}^v \geq T_{\text{out}}^0 \)), where \( T_{\text{in/out}}^0 = |N_{G_1}^{\text{in/out}}(0)| \). It outputs a boolean array \(\texttt{vertexFlag}\), pruning infeasible vertices.

\begin{algorithm}[htbp]
\caption{Vertex Validator}
\label{alg:vertex-validator}
\begin{algorithmic}[1]
    \Procedure{\texttt{VV}}{$G_1, G_2$}
        \State $\texttt{vertexFlag} = [a_1, \dots, a_n]$
        \State $T_{\text{in}}^0 \gets |N_{G_1}^{\text{in}}(0)|$, $T_{\text{out}}^0 \gets |N_{G_1}^{\text{out}}(0)|$
        \ForAll{$v \in V(G_2)$}
            \State $T_{\text{in}}^v \gets |N_{G_2}^{\text{in}}(v)|$, $T_{\text{out}}^v \gets |N_{G_2}^{\text{out}}(v)|$
            \If{$\texttt{checkAttributes}(v, 0) \wedge T_{\text{in}}^v \geq T_{\text{in}}^0 \wedge T_{\text{out}}^v \geq T_{\text{out}}^0$}
                \State $\texttt{vertexFlag}[v] \gets \texttt{true}$
            \EndIf
        \EndFor
        \State \Return $\texttt{vertexFlag}$
    \EndProcedure
\end{algorithmic}
\end{algorithm}

\textbf{Edge Validator} (Algorithm~\ref{alg:edge-validator}): Verifies if edge \((u, v) \in E_2\) maps to the MVE \((0, 1) \in E_1\), checking attributes, degree thresholds for vertex 1, bidirectionality, and neighbor overlap (\(|N_u \cap N_v| \geq |N_0 \cap N_1|\)). It generates and updates state \( s \), setting \(\texttt{core}[0, 1] = (u, v)\) and depth of the mapping to 2.

\begin{algorithm}[htbp]
\caption{Edge Validator}
\label{alg:edge-validator}
\begin{algorithmic}[1]
    \Procedure{EV}{$u, v, s$}
        \State $T_{\text{in/out}}^{u,v} \gets N_{G_2}^{\pm}(u, v)$; \quad $T_{\text{in/out}}^{0,1} \gets N_{G_1}^{\pm}(0, 1)$
        \If{$\neg \texttt{match}(v, 1)$} \Return \texttt{false} \EndIf
        \State $e_1 \gets \texttt{getEdgeId}(u, v)$; \quad $e_1^r \gets \texttt{getEdgeId}(v, u)$
        \State $e_2 \gets \texttt{getEdgeId}(0, 1)$; \quad $e_2^r \gets \texttt{getEdgeId}(1, 0)$
       \If{$\neg \texttt{match}(e_1, e_2) \vee (e_2^r \neq -1 \land e_1^r = -1)$}
    \State \Return \texttt{false}
\EndIf

\If{$e_1^r \neq -1 \land e_2^r \neq -1 \land \neg \texttt{checkAttributes}(e_1^r, e_2^r)$}
    \State \Return \texttt{false}
\EndIf

        \If{$|T_{\text{in}}^v| < |T_{\text{in}}^1| \vee |T_{\text{out}}^v| < |T_{\text{out}}^1|$} \Return \texttt{false} \EndIf
        \State $N_{u,v} \gets T_{\text{in}}^u \cup T_{\text{out}}^u \cup T_{\text{in}}^v \cup T_{\text{out}}^v$
        \State $N_{0,1} \gets T_{\text{in}}^0 \cup T_{\text{out}}^0 \cup T_{\text{in}}^1 \cup T_{\text{out}}^1$
        \If{$|N_u \cap N_v| < |N_0 \cap N_1|$} \Return \texttt{false} \EndIf
        \State $s.T_{\text{in/out}}^{G_2} \gets T_{\text{in/out}}^u \cup T_{\text{in/out}}^v \setminus \{u, v\}$
        \State $s.T_{\text{in/out}}^{G_1} \gets T_{\text{in/out}}^0 \cup T_{\text{in/out}}^1 \setminus \{0, 1\}$
        \State $s.\text{depth} \gets s.\text{depth} + 2$
        \State $s.\text{core}[0] \gets u$; \quad $s.\text{core}[1] \gets v$
        \State \Return \texttt{true}
    \EndProcedure
\end{algorithmic}
\end{algorithm}

\begin{proposition}[Correctness of Vertex Validator]
If \( v \in V_2 \) fails the conditions in Algorithm~\ref{alg:vertex-validator}, then \( v \) cannot serve as a valid mapping for vertex 0 in any subgraph match from \( G_1 \) to \( G_2 \).
\end{proposition}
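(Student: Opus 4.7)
The plan is to argue by contrapositive: suppose $v$ does serve as a valid image of vertex $0$ under some subgraph match $f: V_1 \to V_2$, and show that each of the three conditions tested in Algorithm~\ref{alg:vertex-validator} must hold. This naturally splits the argument into independent claims, one per check performed by \texttt{VV}.

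First I would handle the attribute condition. From the definition of subgraph isomorphism stated in the preliminaries, any valid mapping satisfies $\alpha_1(u) = \alpha_2(f(u))$ for all $u \in V_1$. Specialized to $u = 0$ with $f(0) = v$, this yields $\alpha_1(0) = \alpha_2(v)$, together with the analogous identity for any auxiliary attributes that \texttt{checkAttributes} consults. Hence failure of \texttt{checkAttributes}$(v, 0)$ immediately precludes any valid match sending $0$ to $v$.

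Next I would address the degree thresholds. Since $f$ is injective and preserves edges, every in-neighbor $u \in N_{G_1}^{\text{in}}(0)$ satisfies $(u, 0) \in E_1$, which forces $(f(u), f(0)) = (f(u), v) \in E_2$ and therefore $f(u) \in N_{G_2}^{\text{in}}(v)$. Injectivity ensures distinct in-neighbors of $0$ are sent to distinct in-neighbors of $v$, so $f$ restricts to an injection $N_{G_1}^{\text{in}}(0) \hookrightarrow N_{G_2}^{\text{in}}(v)$, giving $T_{\text{in}}^v \geq T_{\text{in}}^0$. The out-degree inequality $T_{\text{out}}^v \geq T_{\text{out}}^0$ follows by a symmetric argument using $(0, u) \in E_1$.

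Contrapositively, if any of \texttt{checkAttributes}$(v,0)$, $T_{\text{in}}^v \geq T_{\text{in}}^0$, or $T_{\text{out}}^v \geq T_{\text{out}}^0$ fails, then no valid match can send $0$ to $v$, which is precisely the claim. I do not expect a real obstacle here: each check encodes a direct necessary condition for the definition of a match. The only subtlety worth flagging is that the argument uses only the forward implication $(u,v) \in E_1 \Rightarrow (f(u), f(v)) \in E_2$, so the same proof applies verbatim to the monomorphism variant, and it also accommodates labeled edges because the vertex-level attribute and degree conditions are independent of edge-attribute constraints.
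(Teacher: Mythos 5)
Your proof is correct and follows essentially the same approach as the paper's: both decompose the argument into the attribute check and the two degree checks, and show each is a necessary condition for any match with $f(0)=v$. Your version is somewhat more rigorous than the paper's (you actually spell out the injection $N_{G_1}^{\text{in}}(0)\hookrightarrow N_{G_2}^{\text{in}}(v)$ that justifies the degree inequalities, where the paper merely asserts that $v$ ``lacks sufficient neighbors''), but the underlying argument is the same.
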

\begin{proof}
Algorithm~\ref{alg:vertex-validator} checks if \( v \in V_2 \) satisfies \(\texttt{checkAttributes}(v, 0)\), ensuring attribute matches, and degree constraints \( T_{\text{in}}^v \geq T_{\text{in}}^0 \), \( T_{\text{out}}^v \geq T_{\text{out}}^0 \), where \( T_{\text{in/out}}^0 = |N_{G_1}^{\text{in/out}}(0)| \) and \( T_{\text{in/out}}^v = |N_{G_2}^{\text{in/out}}(v)| \) (lines 3--9). If \(\texttt{checkAttributes}(v, 0)\) is false, \( v \)'s attributes violate the label constraints for vertex 0. If \( T_{\text{in}}^v < T_{\text{in}}^0 \) or \( T_{\text{out}}^v < T_{\text{out}}^0 \), \( v \) lacks sufficient in- or out-neighbors to match 0's adjacency structure in \( G_1 \). Thus, \( v \) cannot be part of any valid embedding \(\phi\) with \(\phi(0) = v\), ensuring correctness of pruning.
\end{proof}

\begin{proposition}[Correctness of Edge Validator]\label{prop:edge_validator}
If Algorithm~\ref{alg:edge-validator} returns \(\texttt{true}\) for edge \((u, v) \in E_2\), then \(\{0 \mapsto u, 1 \mapsto v\}\) forms a valid partial mapping extendable to a subgraph matching of \( G_1 \) into \( G_2 \). If it returns \(\texttt{false}\), \((u, v)\) cannot map to \((0, 1) \in E_1\).
\end{proposition}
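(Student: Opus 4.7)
The plan is to prove the biconditional by treating the two directions separately, since the algorithm is naturally expressed as a sequence of early-exit guards followed by a state update. I would first argue soundness of rejection: every early return of \texttt{false} corresponds to the violation of a necessary condition that any mapping \( \{0 \mapsto u,\, 1 \mapsto v\} \) extending to a subgraph matching must satisfy, so pruning is safe. I would then argue that when the algorithm reaches the final \texttt{true} return, the partial mapping written into the state \( s \) satisfies all the local constraints required by a valid depth-2 mapping, so no valid embedding is discarded and the injected state is admissible input for the subsequent tree-search.

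For the rejection direction, I would walk through the guards in the order they appear. The check \(\texttt{match}(v, 1)\) enforces vertex-attribute compatibility for vertex 1, mirroring the argument of Proposition~2 applied to vertex 1. The check \(\texttt{match}(e_1, e_2)\) enforces edge existence and forward edge-attribute compatibility for \((0,1) \to (u,v)\). The disjunct \( e_2^r \neq -1 \land e_1^r = -1 \) captures the case in which \( G_1 \) has the reverse edge \((1,0)\) but \( G_2 \) does not, which precludes any isomorphism; the subsequent attribute check on \( e_1^r, e_2^r \) enforces reverse-edge label compatibility. The degree guards \( |T_{\text{in}}^v| < |T_{\text{in}}^1| \) and \( |T_{\text{out}}^v| < |T_{\text{out}}^1| \) reuse the degree argument of Proposition~2 for vertex 1. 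Finally, the common-neighbor bound \( |N_u \cap N_v| < |N_0 \cap N_1| \) is a necessary condition because any extension must assign the shared neighbors of \( 0 \) and \( 1 \) in \( G_1 \) injectively into the shared neighbors of \( u \) and \( v \) in \( G_2 \). Each of these is a local necessary condition, so failing any one of them rules out every possible completion, establishing the contrapositive.

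For the acceptance direction, I would verify that once all guards pass the state written into \( s \) is a legitimate depth-2 state: the mapping \(\texttt{core}[0] = u,\; \texttt{core}[1] = v\) satisfies the three isomorphism constraints at the mapped vertices by the guards above, the frontier sets \( s.T^{G_1}_{\text{in/out}} \) and \( s.T^{G_2}_{\text{in/out}} \) coincide with the standard VF2-style frontier definitions restricted to the depth-2 mapping, and the depth counter is updated consistently. The main obstacle will be the wording ``extendable to a subgraph matching,'' since passing purely local tests cannot by itself guarantee that a global embedding exists. I would resolve this by reading ``extendable'' in the sense used throughout the paper, namely that the validator introduces no false negatives, so every edge mapping that could extend to a complete match survives it and is handed off to the tree-search for further validation; under this reading the acceptance direction reduces to verifying local feasibility, which the guards already establish, and the correctness of the end-to-end search is preserved by the soundness established above.
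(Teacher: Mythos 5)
Your proof takes essentially the same route as the paper's: a guard-by-guard argument that each early \texttt{false} return corresponds to a violated necessary condition (vertex/edge attribute compatibility, reverse-edge existence and labels, degree bounds for vertex 1, common-neighbor count), followed by the observation that a surviving candidate yields a well-formed depth-2 state for the tree search. Where you go beyond the paper is in the forward direction: you correctly flag that the statement's phrase ``extendable to a subgraph matching'' cannot be established by purely local tests, since passing the validator does not imply a global embedding exists. The paper's own proof simply asserts that the partial mapping is ``extendable by VF2-PS'' without addressing this, so your explicit reinterpretation of the claim as ``the validator admits no false negatives, and extendability is settled downstream by VF2-PS'' is both the charitable reading and the only one under which the proposition is actually true; this reading is also the one the later correctness and completeness theorems rely on, since they invoke Proposition~\ref{prop:edge_validator} only to justify that the injected state is a valid partial mapping and that no valid embedding's initial edge is pruned. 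Your added justification for the common-neighbor guard (injectivity forces shared neighbors of \(0\) and \(1\) into shared neighbors of \(u\) and \(v\)) is likewise more explicit than anything in the paper. No gaps.
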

\begin{proof}
Algorithm~\ref{alg:edge-validator} verifies that \((u, v) \in E_2\) matches \((0, 1) \in E_1\) (line 3); (2) edge attribute matches via \(\texttt{match}(e_1, e_2)\) (line 6); (3) bidirectionality, ensuring \((v, u) \in E_2\) exists if \((1, 0) \in E_1\) (lines 7--8); (4) degree thresholds \( |T_{\text{in}}^v| \geq |T_{\text{in}}^1| \), \( |T_{\text{out}}^v| \geq |T_{\text{out}}^1| \) (line 8); and (5) neighbor overlap \( |N_u \cap N_v| \geq |N_0 \cap N_1| \) (line 11). If any check fails, \((u, v)\) violates structural or label constraints, precluding a valid mapping. If all checks pass, the partial mapping \(\{0 \mapsto u, 1 \mapsto v\}\) satisfies adjacency and label constraints, extendable by VF2-PS, ensuring correctness. 
\end{proof}
\subsection{Hybrid Algorithm with State Injection}
Algorithm~\ref{alg:si} implements the hybrid approach by injecting pre-validated MVE mappings into traditional tree-search. For each potentially valid edge mapping \((u, v) \rightarrow (0, 1)\), the algorithm creates a partial state \( s \) with the first two vertices already mapped and injects it at depth 2 into VF2-PS. This bypasses the expensive exploration of depths 0-1 where pruning opportunities are limited, while preserving the efficiency of established tree-search methods for deeper exploration.

\begin{algorithm}[htbp]
\caption{HiPerMotif Algorithm}
\label{alg:si}
\begin{algorithmic}[1]
    \Procedure{HiPerMotif}{$G_1, G_2$}
        \State $M \gets \text{new list(int)}$
        \ForAll{$e \in E_2$} \Comment{Parallel over edges}
            \State $u \gets \text{src}(e)$, $v \gets \text{dst}(e)$
            \If{$\texttt{vertexFlag}[u] \land u \neq v$}
                \State $s \gets \text{new State}(|V_2|, |V_1|)$
                \If{$\texttt{EV}(u, v, s)$}
                    \State $M_{\text{new}} \gets \texttt{VF2-PS}(s, 2)$
                    \State $M \gets M \cup M_{\text{new}}$
                \EndIf
            \EndIf
        \EndFor
        \State \Return $M$
    \EndProcedure
\end{algorithmic}
\end{algorithm}

The key innovation lies in the state injection mechanism: instead of VF2-PS starting from an empty mapping and sequentially assigning vertex 0, then vertex 1, HiPerMotif pre-validates all possible \((u, v) \rightarrow (0, 1)\) MVE mappings and directly initializes VF2-PS with \(\texttt{core}[0] = u, \texttt{core}[1] = v\) at depth 2. This hybrid strategy enables natural parallelization over \(|E_2|\) edge candidates while maintaining the proven correctness and completeness of VF2-PS for the remaining vertex assignments.

\begin{figure*}[t]
    \centering
    \includegraphics[width=\textwidth]{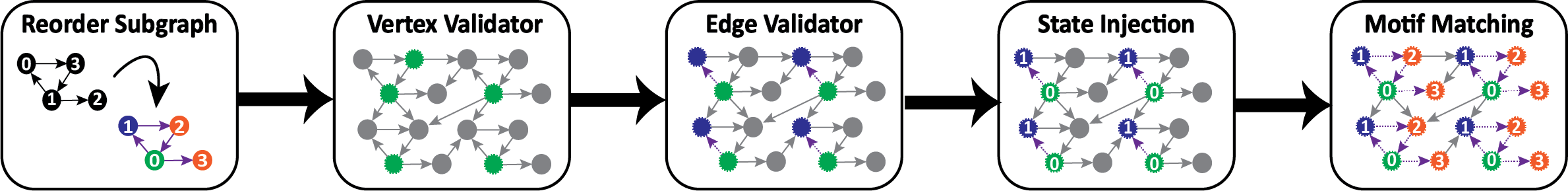}
    \caption{HiPerMotif's workflow: structural reordering, MVE selection, vertex/edge validation, and state injection.}
    \label{fig:workflow}
\end{figure*}

\begin{theorem}[Correctness of Injecting States]
Every mapping \(\phi \in M\) returned by Algorithm~\ref{alg:si} is a correct embedding of \( G_1 \) in \( G_2 \), satisfying adjacency and label constraints.
\end{theorem}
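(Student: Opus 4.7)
The plan is to argue that any mapping returned by Algorithm~\ref{alg:si} is obtained by extending a valid two-vertex partial mapping through VF2-PS, and therefore inherits both the soundness of the validators and the established correctness of VF2-PS. The proof splits naturally into three parts: (i) the partial mapping injected at depth 2 satisfies the subgraph matching constraints restricted to $\{0,1\}$; (ii) the state $s$ passed to VF2-PS is a well-formed VF2 state with correct frontier sets and core arrays; and (iii) VF2-PS preserves validity when extended from any well-formed partial state.

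For (i), I would invoke Proposition~\ref{prop:edge_validator}: whenever the loop body reaches the call \texttt{VF2-PS}$(s, 2)$, the edge validator has returned \texttt{true}, which by that proposition certifies that $\{0 \mapsto u, 1 \mapsto v\}$ respects adjacency (including the bidirectional edge, if present), vertex attributes (for both endpoints, since $v$ is checked explicitly by \texttt{match} and $u$ was admitted by \texttt{vertexFlag} via Algorithm~\ref{alg:vertex-validator}), and edge attributes in both directions. The guard $u \neq v$ ensures injectivity of the two-vertex mapping. Thus $\{0 \mapsto u, 1 \mapsto v\}$ is a partial embedding in the sense required by the subgraph isomorphism problem.

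For (ii), I would verify that the fields assigned by the edge validator, namely $s.\text{core}[0]=u$, $s.\text{core}[1]=v$, the frontier sets $s.T_{\text{in/out}}^{G_1}$ and $s.T_{\text{in/out}}^{G_2}$ computed as unions of in/out neighborhoods of the mapped vertices with the mapped vertices themselves removed, and $s.\text{depth}=2$, exactly coincide with the state VF2-PS would have produced had it begun from the empty state and successively mapped vertex $0$ to $u$ and vertex $1$ to $v$. This step is essentially bookkeeping against the VF2 state definition recalled in the VF2-PS subsection; once matched, the call \texttt{VF2-PS}$(s,2)$ is operationally indistinguishable from VF2-PS after two consecutive feasible expansions.

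For (iii), I would appeal to the correctness of VF2-PS established in prior work~\cite{dindoost2024vf2}: starting from any well-formed partial state that satisfies the adjacency and label constraints on the already-mapped vertices, VF2-PS returns only mappings $\phi$ that are valid embeddings of $G_1$ into $G_2$. Combining this with (i) and (ii) yields that every $\phi \in M_{\text{new}}$, and hence every $\phi \in M$, is a valid embedding. Finally, since the structural reordering permutation $\pi$ is a bijection on $V_1$ that preserves edges and attributes (Proposition on correctness of structural reordering), any embedding of the reordered $G_1$ into $G_2$ lifts to an embedding of the original $G_1$, completing the argument. The main obstacle I anticipate is step (ii): carefully matching the injected frontier sets and core arrays to the exact invariants VF2-PS expects, so that no hidden bookkeeping (e.g., stale entries in $T_{\text{in/out}}$ or mis-set depth counters) silently causes VF2-PS to produce a mapping it would otherwise have rejected; the rest of the argument is essentially a composition of already-proved results.
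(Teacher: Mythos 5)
Your proposal is correct and follows essentially the same route as the paper: invoke the edge-validator correctness proposition to certify the injected two-vertex partial mapping, then appeal to the established correctness of VF2-PS to conclude that every returned $\phi$ is a valid embedding. The paper's own proof is terser — it does not explicitly carry out your step (ii) verifying that the injected state's core arrays, frontier sets, and depth coincide with what VF2-PS would have built itself, nor does it mention lifting embeddings of the reordered pattern back through $\pi$ — so your version is, if anything, more careful about the bookkeeping the paper takes for granted.
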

\begin{proof}
Algorithm~\ref{alg:si} processes edges \((u, v) \in E_2\) where \(\texttt{vertexFlag}[u] = \text{true}\) (line 5). If \(\texttt{EV}(u, v, s)\) returns \(\text{true}\) (line 7), Proposition~\ref{prop:edge_validator} ensures \(\{0 \mapsto u, 1 \mapsto v\}\) is a valid partial mapping. The state \( s \), with \(\texttt{core}[0, 1] = (u, v)\) and depth 2, is passed to VF2-PS (line 8), which enforces adjacency and label constraints for remaining vertices~\cite{dindoost2024vf2}. Since VF2-PS only extends valid partial mappings, all \(\phi \in M\) (line 9) satisfy the subgraph isomorphism constraints, ensuring correctness.
\end{proof}

\begin{theorem}[Completeness of State Injection]
If there exists a valid embedding \(\phi: V_1 \to V_2\) with \(\phi(0) = u\), \(\phi(1) = v\), and \(\texttt{EV}(u, v, s) = \text{true}\), then \(\phi \in M\). Thus, HiPerMotif finds all embeddings that VF2-PS would find for initial edge \((u, v)\) matching \((0, 1)\).
\end{theorem}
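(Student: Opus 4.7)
The plan is to trace the execution of Algorithm~\ref{alg:si} on the hypothesized pair $(u, v)$ and show that every gating condition along the way is satisfied, so that control reaches the VF2-PS call on the injected state, after which completeness of HiPerMotif reduces to the already-established completeness of VF2-PS extending a valid partial mapping.

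First I would fix the valid embedding $\phi$ with $\phi(0) = u$ and $\phi(1) = v$. Since $\phi$ respects adjacency, $(u, v) \in E_2$, so the outer parallel loop over $E_2$ in line~3 necessarily iterates over the edge $e$ with $\mathrm{src}(e) = u$ and $\mathrm{dst}(e) = v$. Injectivity of $\phi$ together with $0 \neq 1$ gives $u \neq v$, clearing that side of the line~5 conjunction. For the other side, $\texttt{vertexFlag}[u]$, I would argue by the contrapositive of Proposition (Correctness of Vertex Validator): if $\texttt{vertexFlag}[u]$ were false, then $u$ could not serve as a valid image of vertex $0$ in any embedding, contradicting $\phi(0) = u$. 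Hence $\texttt{vertexFlag}[u] = \texttt{true}$ and execution enters the conditional.

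Next, the theorem's hypothesis supplies $\texttt{EV}(u, v, s) = \texttt{true}$, so line~7 passes and VF2-PS is invoked on the injected state $s$ with $s.\texttt{core}[0] = u$, $s.\texttt{core}[1] = v$, depth $2$, and frontier sets initialized from $N^{\pm}_{G_2}(u) \cup N^{\pm}_{G_2}(v)$ and $N^{\pm}_{G_1}(0) \cup N^{\pm}_{G_1}(1)$ respectively (as constructed inside $\texttt{EV}$). The essential observation is that this injected $s$ is exactly the state that VF2-PS itself would have produced after two successful extension steps mapping $0 \mapsto u$ and $1 \mapsto v$ from the empty state: the frontier sets are defined as deterministic functions of the current partial mapping, so they coincide with what VF2-PS would have constructed internally. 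Consequently, the subtree of the VF2-PS search rooted at $s$ is identical to the subtree VF2-PS would explore after it had itself selected $u$ and $v$ for vertices $0$ and $1$.

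Finally, I would invoke the completeness of VF2-PS (inherited from VF2 and formalized in~\cite{dindoost2024vf2}): from any valid partial mapping, VF2-PS enumerates every extension to a complete embedding that agrees with that partial mapping. Applied to the restriction $\phi|_{\{0,1\}}$, this yields $\phi \in \texttt{VF2-PS}(s, 2) = M_{\text{new}}$, and line~9 then inserts $\phi$ into $M$. I expect the main subtlety to be the second step above, namely justifying that the frontier sets constructed by $\texttt{EV}$ agree exactly with those VF2-PS would maintain after two ordinary extensions; once that invariant is stated carefully, the rest is a clean appeal to vertex-validator correctness and to the completeness of the underlying VF2-PS search.
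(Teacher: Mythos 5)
Your proposal is correct and follows essentially the same route as the paper's proof: trace Algorithm~\ref{alg:si} to the \texttt{VF2-PS} call on the injected state, then appeal to the completeness of VF2-PS in enumerating all extensions of the partial mapping $\{0 \mapsto u, 1 \mapsto v\}$. In fact you are somewhat more careful than the paper, which never checks the line-5 gate at all --- your argument that $(u,v) \in E_2$ (so the loop reaches it), that $u \neq v$ by injectivity, and that $\texttt{vertexFlag}[u]$ must hold by the contrapositive of the Vertex Validator correctness proposition closes a small gap the paper leaves implicit, and your flagged subtlety about the frontier sets constructed by \texttt{EV} coinciding with those VF2-PS would build after two ordinary extensions is precisely the invariant the paper's one-line appeal to VF2-PS completeness silently relies on.
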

\begin{proof}
For a valid embedding \(\phi\), \(\texttt{checkAttributes}(u \to v, 0 \to 1) = \text{true}\), \(\texttt{checkAttributes}(v, 1) = \text{true}\), and degree/neighbor criteria hold (Section III-B), so \(\texttt{EV}(u, v, s) = \text{true}\) (line 7). Algorithm~\ref{alg:si} passes state \( s \) to VF2-PS at depth 2 (line 8), which exhaustively enumerates all extensions of \(\{0 \mapsto u, 1 \mapsto v\}\)~\cite{dindoost2024vf2}. Since VF2-PS is complete, \(\phi\) is included in \( M_{\text{new}} \) and thus \( M \) (line 9). Hence, HiPerMotif captures all valid embeddings for \((u, v)\) matching \((0, 1)\), preserving VF2-PS’s completeness.
\end{proof}

HiPerMotif is correct and complete, finding all valid embeddings as VF2-PS when \((u, v)\) matches the MVE \((0, 1)\). Time complexity is \( O(|E_2| \cdot |V_2|) \) in the worst case, but sparse graphs approach \( O(|E_2|) \). Space complexity is \( O(|V_1| + |V_2|) \) per state, lower than domain-aware methods like LAD~\cite{solnon2010alldifferent}. Reordering reduces candidate vertices, boosting pruning, as verified in experiments (Section~\ref{sec:experiments}).

\subsection{Enhancing the Arachne Framework} \label{sec:arachne}
HiPerMotif is implemented in Arachne~\cite{rodriguez2022arachne,du2021interactive}, an open-source framework for large-scale graph analytics with a Python API and Chapel~\cite{chamberlain2007parallel} base servers\cite{bader2023triangle, rodriguez2023property, rodriguez2022arachne}. Our enhancements optimize Arachne for attribute-rich graphs, supporting arbitrary numbers of vertex and edge properties. Graphs are stored in a double-index (DI) format~\cite{a14080221,rodriguez2023property}, extending compressed-sparse row (CSR) with an edge-to-source array for \( O(1) \) edge access, critical for the edge validator's frequent endpoint queries (Algorithm~\ref{alg:edge-validator}). Attributes are managed in contiguous arrays~\cite{rodriguez2023property}, enabling \( O(1) \) vertex and \( O(\log d) \) edge attribute access (where \( d \) is the maximum out-degree), enhancing the efficiency of \(\texttt{checkAttributes}\) in vertex and edge validators (Algorithms~\ref{alg:vertex-validator},~\ref{alg:edge-validator}). Chapel's concurrency ensures parallel scalability for HiPerMotif's reordering and matching. These improvements enable high-performance motif finding on large graphs, as validated in Section~\ref{sec:experiments}.

\section{Experimental Evaluation} \label{sec:experiments}

We evaluate HiPerMotif against state-of-the-art parallel subgraph matching algorithms: VF2-PS \cite{dindoost2024vf2}, VF3P \cite{carletti2019parallel}, and Glasgow Subgraph Solver \cite{mccreesh2020glasgow} across synthetic and real-world datasets. All experiments report mean execution times from 5 independent runs with 95\% confidence intervals. Experiments were conducted on dual AMD EPYC 7713 processors (128 cores total, 2.0GHz) with 512GB RAM.

\subsection{Datasets}

\subsubsection{Synthetic Graph Generation}

For controlled experimental conditions, we developed several families of synthetic graphs that allow systematic evaluation of algorithm performance across different network topologies. Our synthetic dataset collection includes Erd\H{o}s--R\'enyi random graphs created with carefully selected edge probability parameters to represent different network densities. We generated sparse networks with edge probability $P = 0.0005$, medium density networks with $P = 0.005$, and dense networks with $P = 0.05$, spanning size ranges from 5,000 to 130,000 vertices to evaluate scalability characteristics.

Beyond random graphs, we incorporated small-world networks generated using the Watts--Strogatz model with various rewiring probabilities to capture the clustering and short path length properties common in many real-world networks. We also created scale-free networks using the Barab\'asi--Albert preferential attachment model with different parameter settings to represent networks with power-law degree distributions typical of social and biological systems.

Attributes were assigned using uniform random distribution.

\subsubsection{Real-world Network Collections}

Our evaluation incorporates diverse real-world datasets from multiple domains to validate practical applicability. The neuroscience connectome datasets form a central component of our evaluation, including the FlyWire and Hemibrain dataset.

Beyond neuroscience applications, we included additional domain-specific directed networks: EU email communication network and twitter. All graphs in our evaluation are directed.

\subsection{Pattern Graph Design and Motif Selection}
Our experimental framework employs a comprehensive collection of query patterns designed to evaluate algorithm performance across different structural complexities and application domains. These patterns include both well-established motifs commonly used in network science literature as fundamental building blocks of complex networks \cite{milo2002network, alon2007network}, as well as randomly generated subgraphs from 3 to 20 nodes to provide broader coverage of structural diversity.

\subsubsection{Synthetic Graph Performance}
\begin{figure*}[htbp]
\centering
\includegraphics[width=0.9\textwidth]{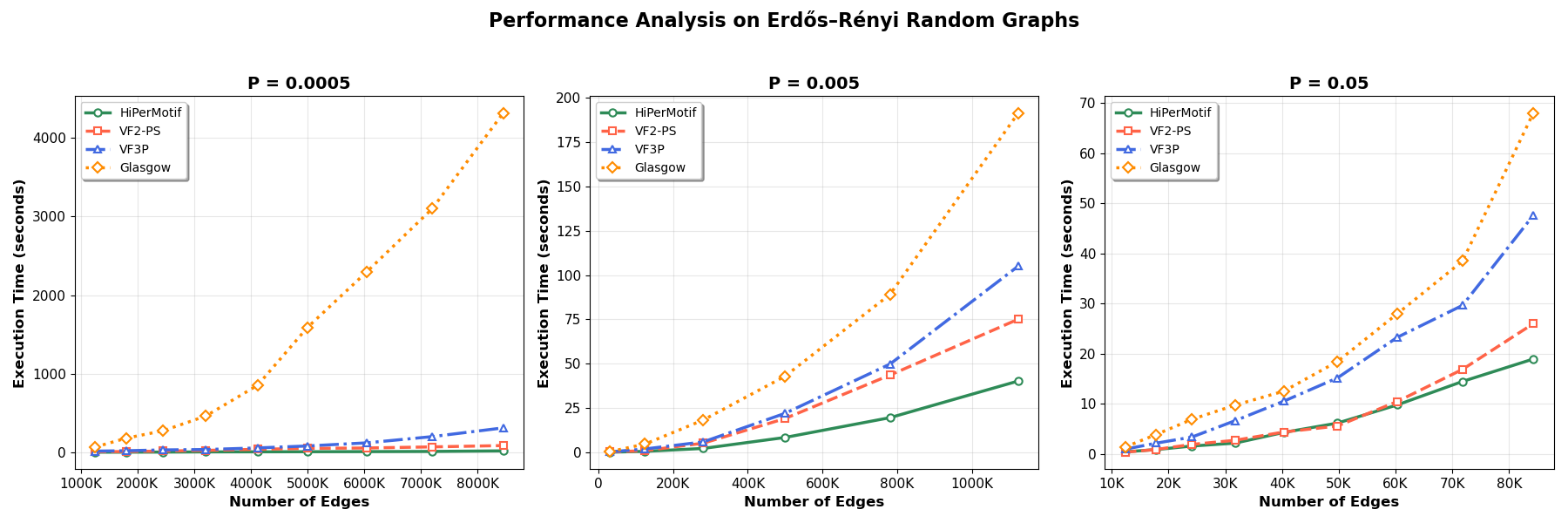}
\caption{Performance comparison on Erdős-Rényi random graphs with varying densities. HiPerMotif demonstrates superior scalability on larger graphs but shows overhead on sparse graphs with few nodes, as it is designed to handle massive and large-scale networks.}
\label{fig:erdos-renyi-performance}
\end{figure*}

The results on Erdős-Rényi random graphs reveal that HiPerMotif exhibits a clear performance advantage as graph size increases, particularly on medium to dense networks. However, when graphs are sparse or contain relatively few nodes, HiPerMotif shows some overhead compared to traditional algorithms. This behavior is expected since HiPerMotif is specifically designed to handle massive and large-scale networks, where its parallel processing capabilities and structural optimizations can be fully utilized. On smaller or very sparse graphs, the overhead of parallel coordination and advanced data structures may outweigh the benefits, making simpler algorithms temporarily more efficient.

\begin{figure}[htbp]
\centering
\includegraphics[width=\columnwidth]{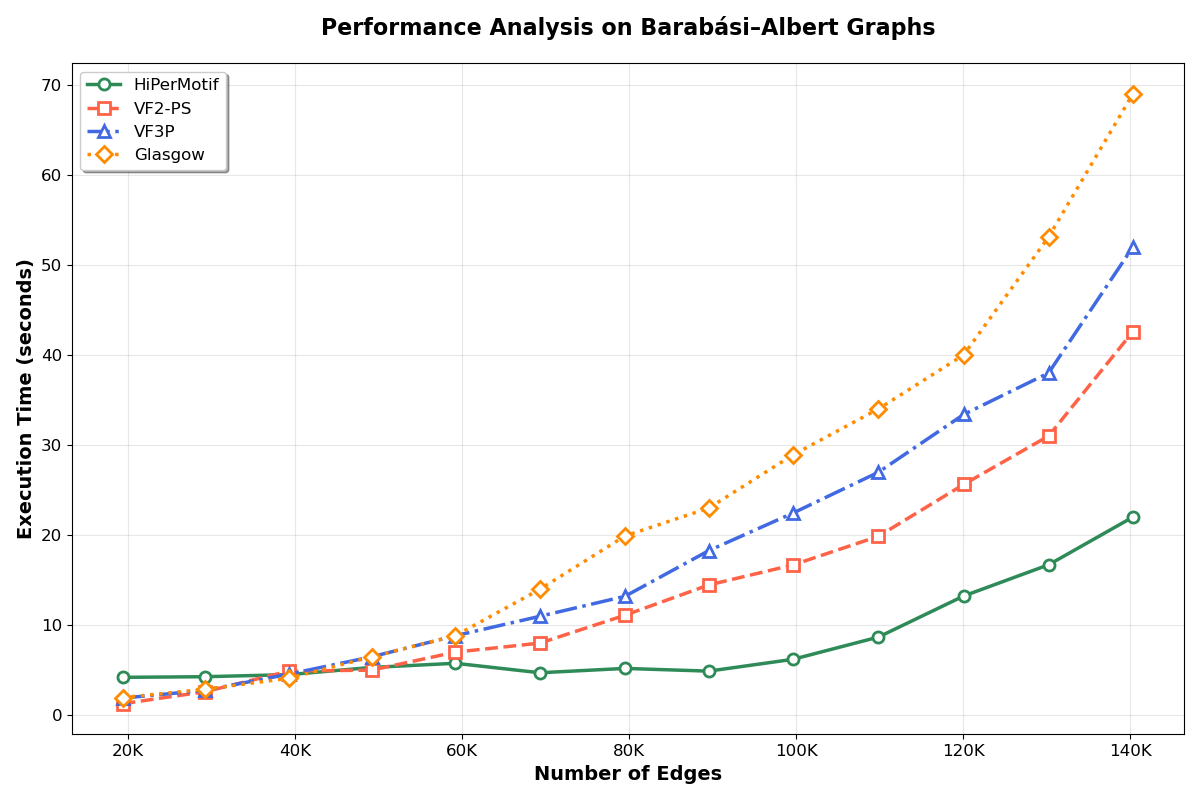}
\caption{Performance comparison on scale-free networks generated with parameters $\alpha = 0.41$, $\beta = 0.54$, $\gamma = 0.05$, $\delta_{in} = 0.2$, $\delta_{out} = 0.2$. The results demonstrate HiPerMotif's effectiveness on networks with power-law degree distributions typical of real-world systems.}
\label{fig:scale-free-performance}
\end{figure}

Scale-free networks, characterized by power-law degree distributions, represent a common topology in many real-world systems including social networks, biological networks, and the internet. We generated these networks using the directed scale-free model with parameters $\alpha = 0.41$, $\beta = 0.54$, $\gamma = 0.05$, $\delta_{in} = 0.2$, and $\delta_{out} = 0.2$. The results show that HiPerMotif maintains its performance advantage on scale-free topologies, effectively handling the heterogeneous degree distribution and hub-dominated structure characteristic of these networks.

\begin{figure}[htbp]
\centering
\includegraphics[width=\columnwidth]{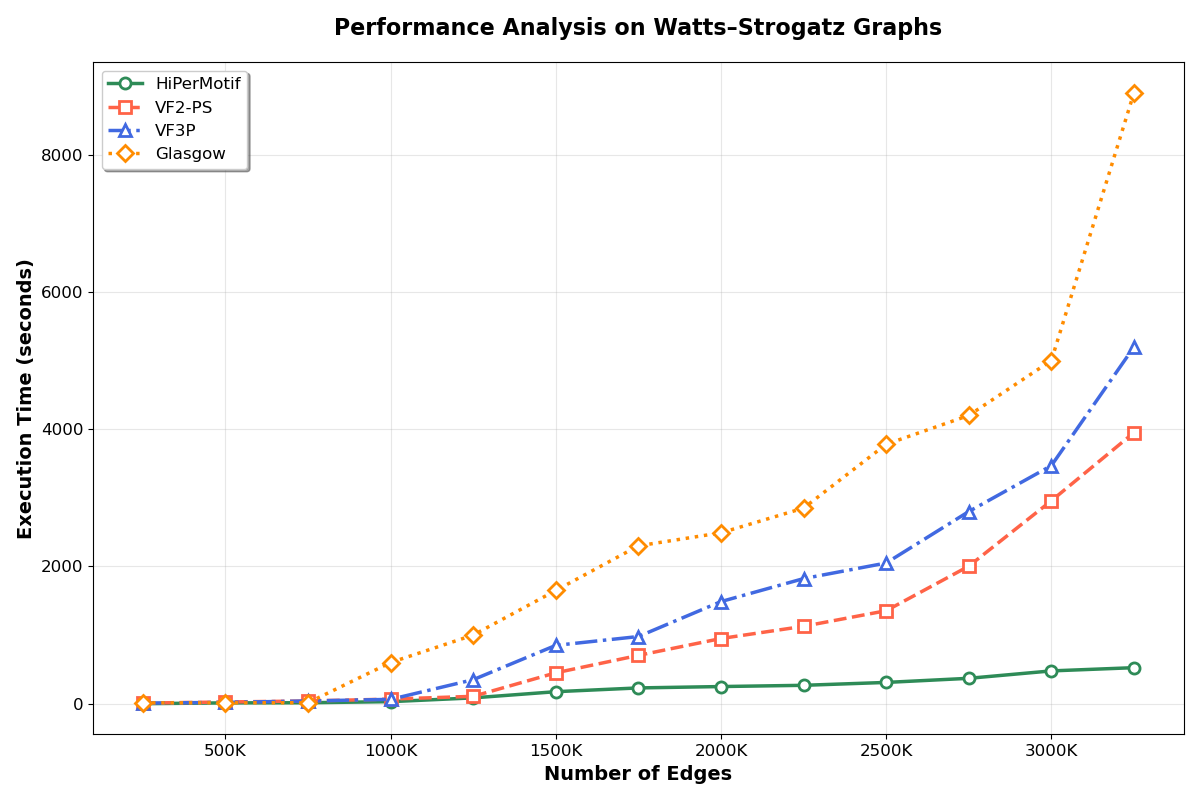}
\caption{Performance comparison on Watts-Strogatz small-world networks generated with parameters $k = 10$, $p = 0.01$, seed = 42. HiPerMotif demonstrates superior performance starting from approximately 250K edges.}
\label{fig:small-world-performance}
\end{figure}

For small-world networks, we employed the Watts-Strogatz model with parameters $k = 10$, $p = 0.01$, and seed = 42 to generate networks that exhibit the characteristic high clustering and short path lengths typical of many real-world systems. The results clearly demonstrate that HiPerMotif maintains competitive performance across all network sizes.

\subsection{Effectiveness of Structural Reordering}

To evaluate the impact of our MVE reordering strategy, we conducted a comprehensive analysis using VF2-PS as the baseline algorithm. We generated 60 random graphs with edge probabilities ranging from 0.0005 to 0.1 and edge counts between 1,000 to 1,500,000. Query patterns were randomly created with sizes from 3 to 20 nodes to provide diverse structural complexity.

\begin{figure}[htbp]
\centering
\includegraphics[width=\columnwidth]{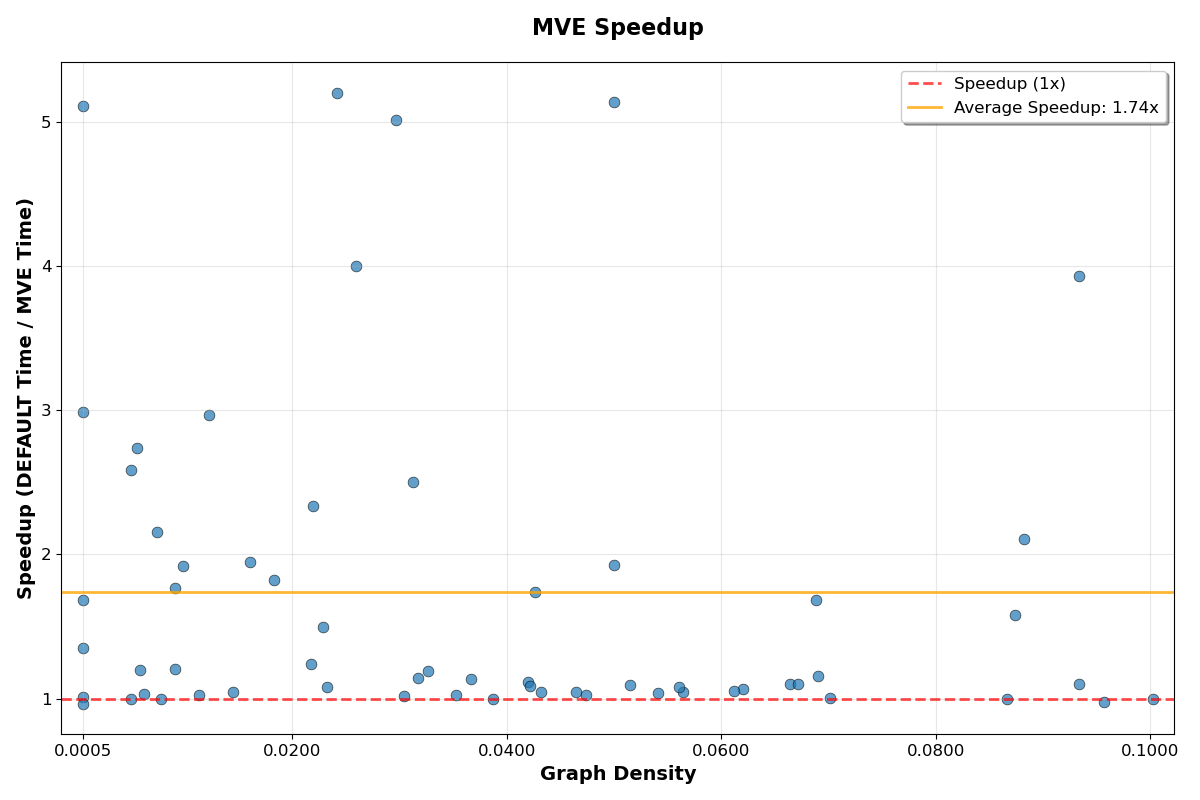}
\caption{Performance improvement achieved by MVE reordering on VF2-PS across diverse graph topologies and query patterns. The reordering strategy shows consistent benefits with peak speedups of 5× and average improvement of 1.74×.}
\label{fig:reordering-effectiveness}
\end{figure}

The results demonstrate the consistent effectiveness of our MVE reordering approach. Across all tested configurations, the reordering strategy introduces no performance degradation while providing substantial benefits in many cases. The peak speedup reached 5× improvement over the baseline VF2-PS, with an average performance gain of 1.74× across all experiments. This confirms that our heuristic structural reordering is beneficial. 

\subsection{Neuroscience Connectome Datasets}

We evaluated algorithm performance on two major neuroscience connectomes that represent different scales of brain reconstruction data.

\textbf{Hemibrain Dataset:} This dataset\cite{scheffer2020connectome} contains 21,739 vertices and 3,550,403 edges representing a partial fruit fly brain reconstruction, providing a realistic test case with complex topological patterns characteristic of biological neural networks.

\begin{figure}[htbp]
\centering
\includegraphics[width=\columnwidth]{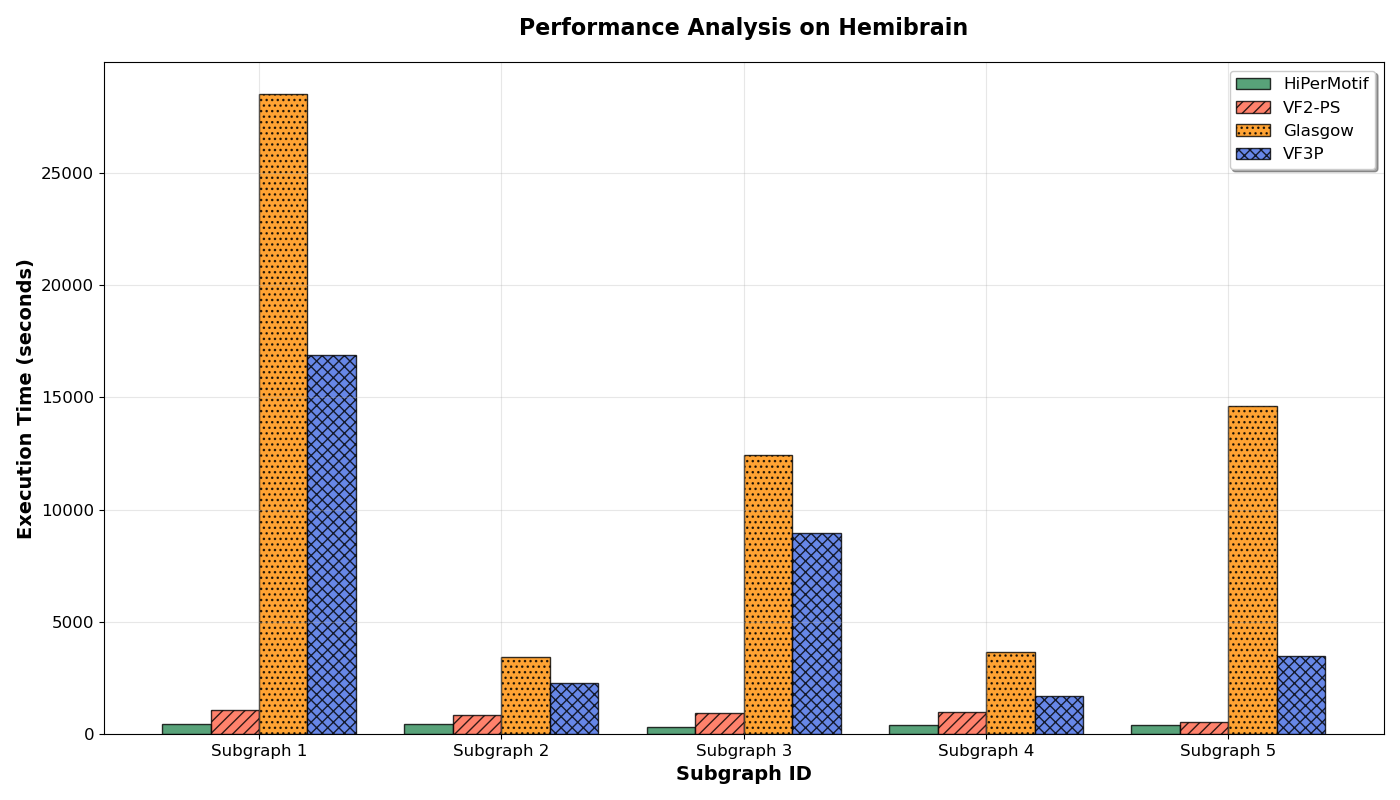}
\caption{Performance comparison on Hemibrain connectome dataset across five randomly generated subgraphs.}
\label{fig:hemibrain-performance}
\end{figure}

\textbf{FlyWire Dataset:} The FlyWire dataset\cite{dorkenwald2024neuronal, schlegel2024whole,zheng2018complete,buhmann2021automatic} represents a complete fruit fly brain connectome with 139,255 vertices and 2,700,513 edges, offering evaluation on a larger-scale neuroscience network.

\begin{figure}[htbp]
\centering
\includegraphics[width=\columnwidth]{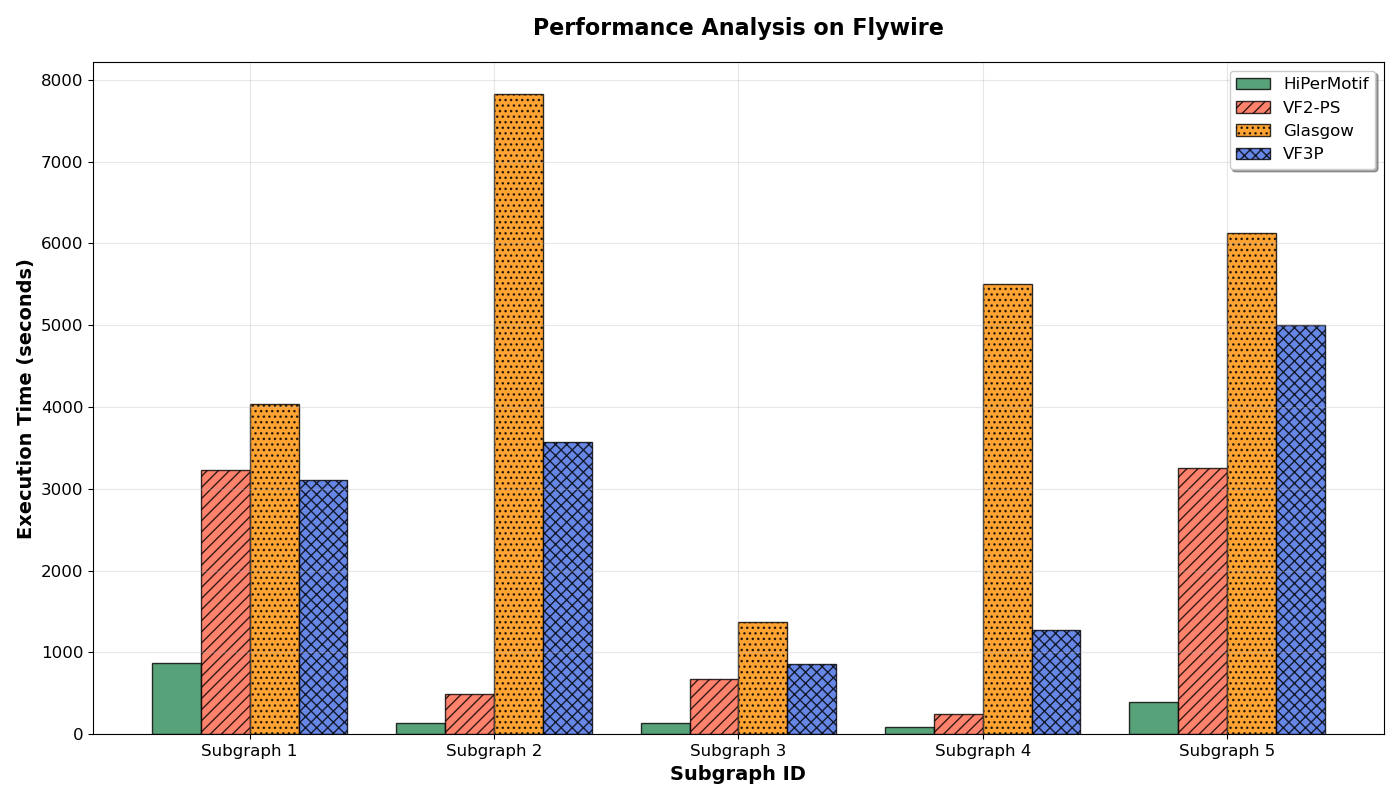}
\caption{Performance comparison on FlyWire connectome dataset across five randomly generated subgraphs.}
\label{fig:flywire-performance}
\end{figure}

The results on both neuroscience datasets demonstrate HiPerMotif's better performance on real-world network structures. Across five different randomly generated subgraph patterns on each dataset, our algorithm consistently outperforms all baseline algorithms and achieves peak speedups of 66.47× on FlyWire and 65.71× on Hemibrain compared to the worst-performing baseline algorithms.

\subsection{Communication Network Datasets}

We evaluated our algorithm on the EU email communication network\cite{leskovec2007graph}, a directed network containing 265,214 vertices and 420,045 edges representing email communications within a European research institution. This dataset provides insights into algorithm performance on communication patterns and organizational network structures.

\begin{figure}[htbp]
\centering
\includegraphics[width=\columnwidth]{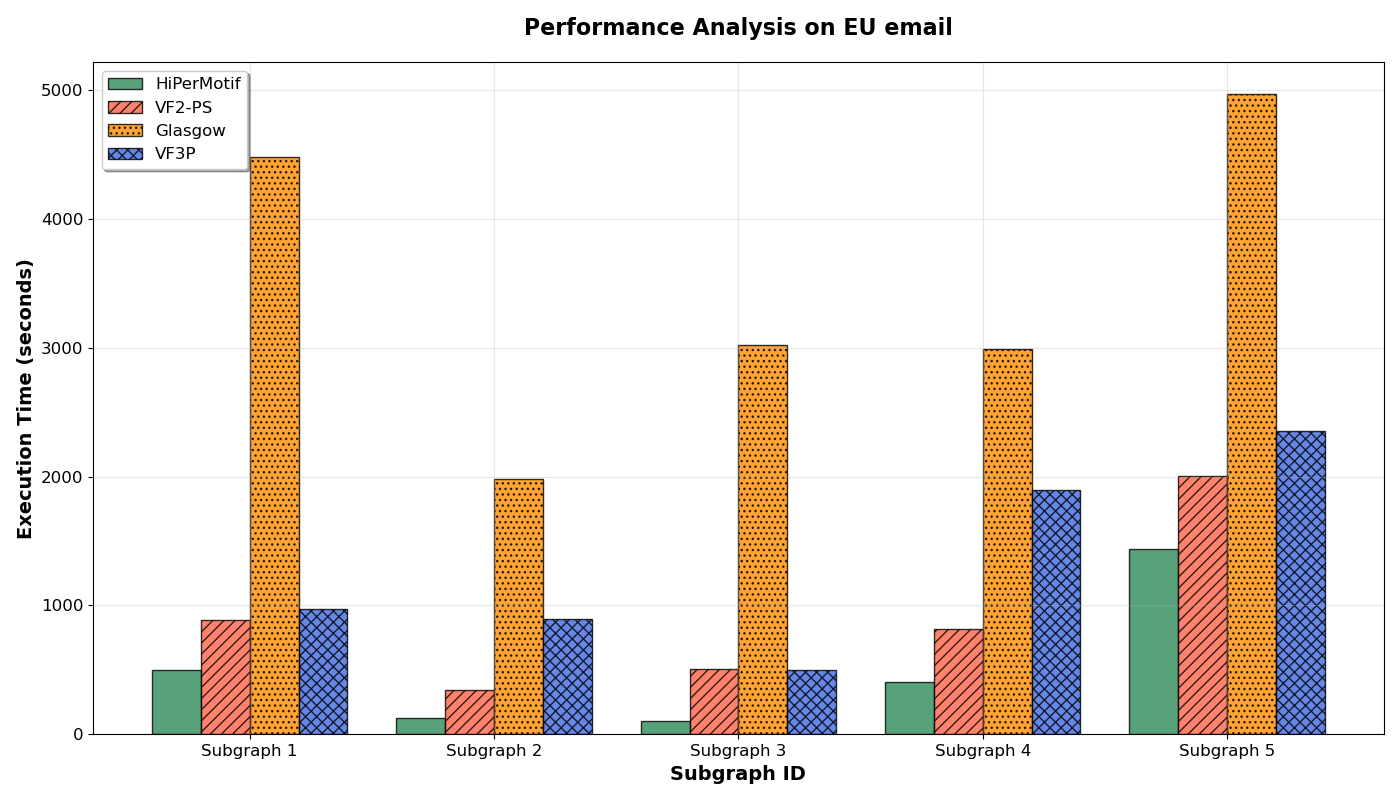}
\caption{Performance comparison on EU email communication network across five randomly generated subgraphs. HiPerMotif demonstrates effective performance on communication network topologies.}
\label{fig:email-communication-performance}
\end{figure}

The results on the EU email network demonstrate effective performance on communication network topologies. Across five different randomly generated subgraph patterns, our algorithm maintains competitive performance compared to baseline algorithms, achieving a peak speedup of 5.92×. This validates the approach's versatility across different network domains beyond biological systems.

\subsection{Social Network Datasets}

We evaluated our algorithm on the Social circles: Twitter dataset\cite{leskovec2012learning}, a directed network containing 81,306 vertices and 1,768,149 edges representing follower relationships and social interactions on Twitter. This dataset allows us to assess performance on social network structures characterized by community formations and influence propagation patterns.

\begin{figure}[htbp]
\centering
\includegraphics[width=\columnwidth]{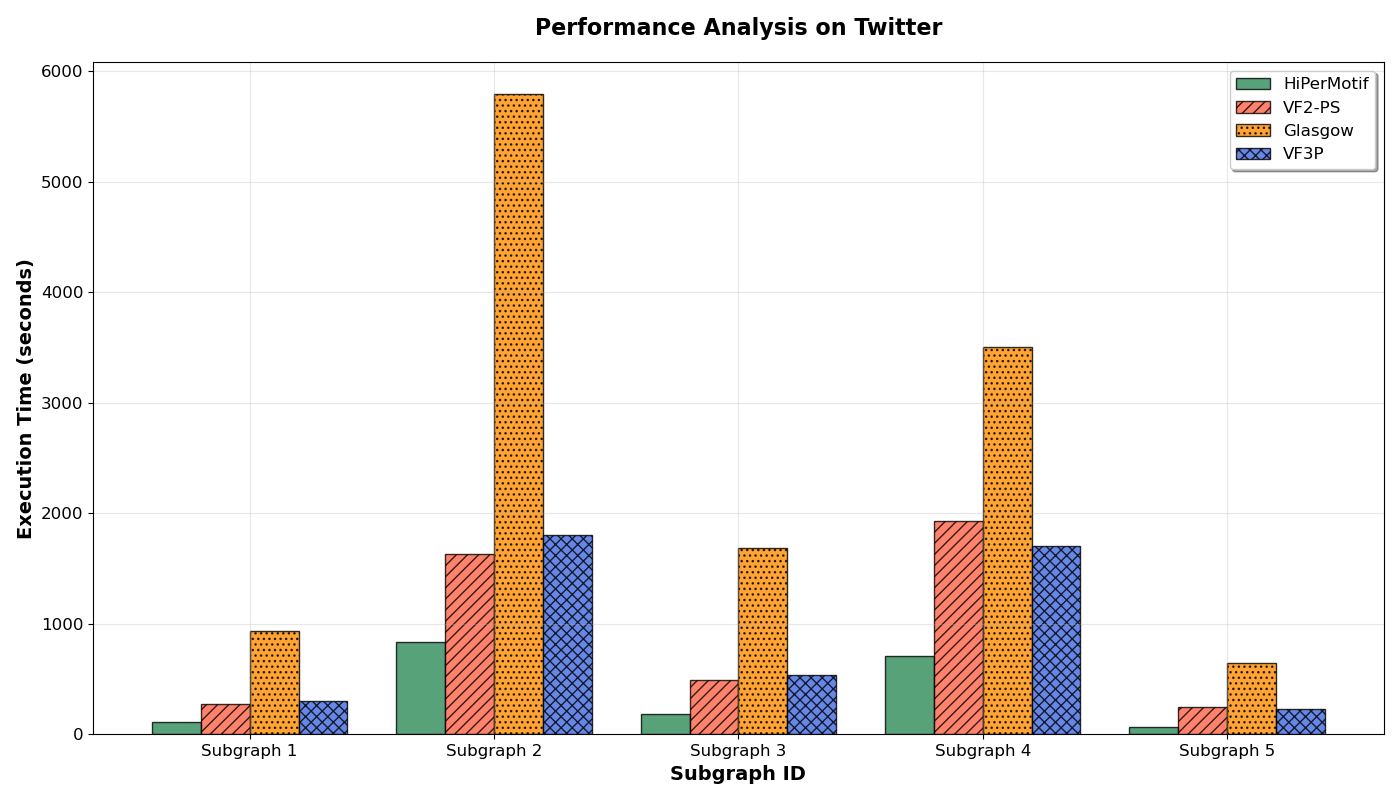}
\caption{Performance comparison on Twitter social circles dataset across five randomly generated subgraphs. The algorithm demonstrates effective performance on social network topologies.}
\label{fig:twitter-social-performance}
\end{figure}

The results on the Twitter social network validate the algorithm's effectiveness across diverse network domains. Across five different randomly generated subgraph patterns, our approach maintains competitive performance compared to baseline algorithms, demonstrating consistent benefits on social network structures with their characteristic clustering and community patterns.

\subsection{Large-Scale Network Analysis: H01 Dataset}

To demonstrate HiPerMotif's capability on truly massive networks, we evaluated performance on the H01 dataset, containing 142,660,662 vertices and 147,071,359 edges representing a cubic millimeter of human cortex. This dataset represents one of the largest connectome reconstructions available and poses significant computational challenges that exceed the capabilities of existing subgraph matching algorithms.

\begin{table}[H]
    \caption{Performance of HiPerMotif on the H01 large-scale dataset. Baseline algorithms could not complete execution due to memory and computational constraints.}
    \centering
    \begin{tabular}{|c|c|}
        \hline
        \textbf{Motif} & \textbf{H01 (seconds)} \\ \hline
        
        \makecell{\vspace{0.05pt}\\\includegraphics[width=1cm]{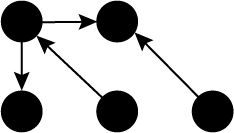}\\\vspace{0.05pt}} & \makecell{571.94} \\ \hline
        \makecell{\vspace{0.05pt}\\\includegraphics[width=1cm]{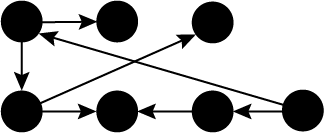}\\\vspace{0.05pt}} & \makecell{1011.62} \\ \hline
        \makecell{\vspace{0.05pt}\\\includegraphics[width=1cm]{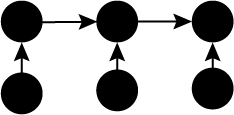}\\\vspace{0.05pt}} & \makecell{21.23} \\ \hline
        \makecell{\vspace{0.05pt}\\\includegraphics[width=1cm]{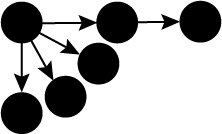}\\\vspace{0.05pt}} & \makecell{363.54} \\ \hline
        \makecell{\vspace{0.05pt}\\\includegraphics[width=1cm]{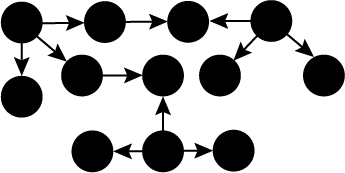}\\\vspace{0.05pt}} & \makecell{1209.82} \\ \hline
    \end{tabular}
    \label{tab:H01_performance}
\end{table}

The H01 results represent a significant achievement in computational neuroscience, as existing baseline algorithms (VF2-PS, VF3P, Glasgow) were unable to complete execution on this massive dataset due to memory limitations and computational complexity. HiPerMotif successfully processed all test motifs, with execution times ranging from 21 seconds to approximately 4 minutes, demonstrating practical feasibility for large-scale connectome analysis. These performance gains are particularly attributable to our Vertex Validator component, which enables rapid pruning of infeasible search paths and significantly reduces the computational complexity on such massive networks. This capability gap highlights the fundamental scalability advantages of our approach and opens new possibilities for analyzing the largest available brain reconstruction datasets.

\subsection{Parallel Speedup Analysis}

We evaluate the parallel performance of HiPerMotif on Erdős-Rényi networks of various sizes. Speedup is defined as $S = \frac{T_{1}}{T_{P}}$, where $T_{1}$ is the execution time of the single-threaded version and $T_{P}$ is the execution time using $P$ threads.

\begin{figure*}[htbp]
    \centering
    \includegraphics[width=0.9\textwidth]{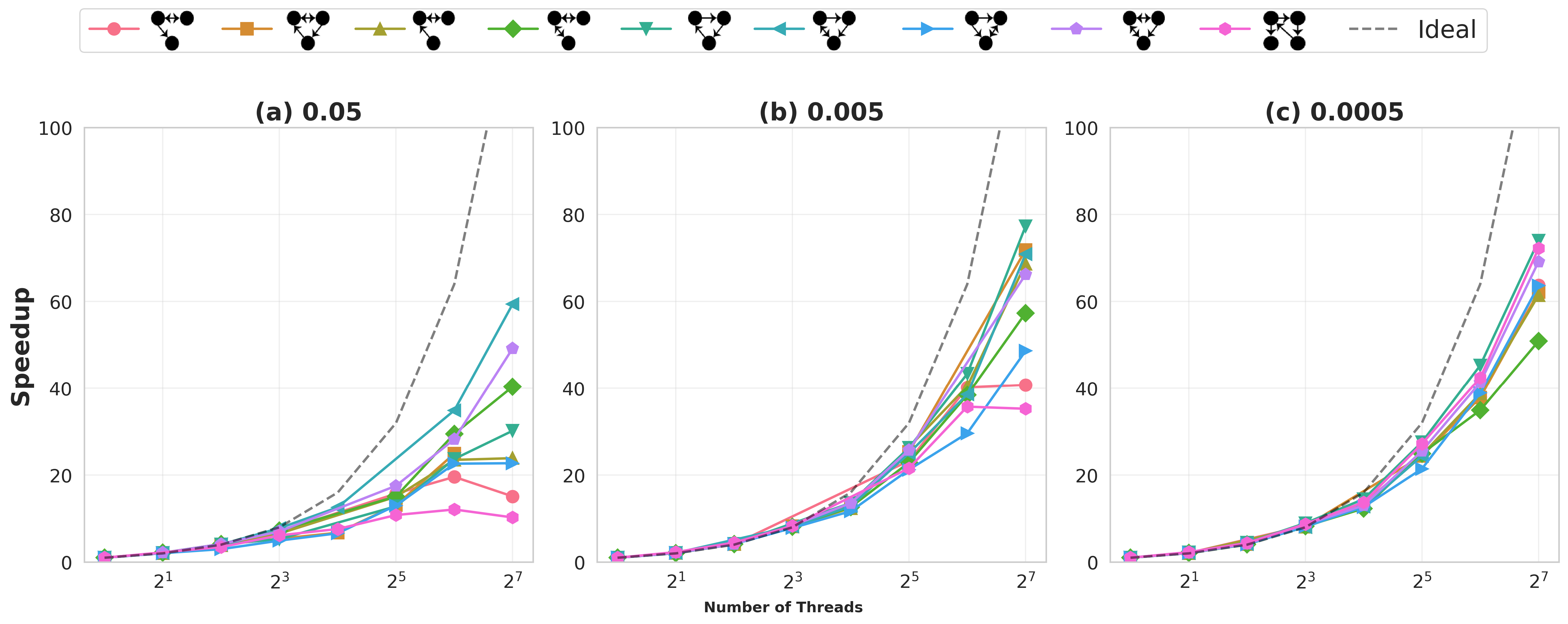}
    \caption{Parallel speedup of HiPerMotif across Erdős-Rényi networks with varying vertex counts and edge probabilities ($p$ = 0.05, 0.005, 0.0005).}
    \label{fig:hipermotif_scalability}
\end{figure*}

The results demonstrate HiPerMotif's parallel scalability, particularly for large-scale graphs where near-linear speedup is frequently achieved. This superior performance stems from several key algorithmic properties. The Vertex Validator rapidly eliminates infeasible states through early pruning, minimizing synchronization overhead and aligning with established parallel search optimization principles \cite{mccreesh2015parallel,carletti2018vf3}. Large-scale graphs naturally generate substantial parallel workloads through numerous concurrent matching states, maximizing processor utilization \cite{amdahl1967validity, gustafson1988reevaluating}. Additionally, the recursive, edge-centric architecture enables efficient parallelization by decomposing partial matches into independent subproblems, minimizing shared data structure contention \cite{mccreesh2015parallel,han2013turboiso}.

For smaller or denser graphs, parallel speedup shows more modest gains, consistent with prior parallel subgraph matching algorithms \cite{mccreesh2015parallel, cheng2009efficient}. This behavior reflects fundamental parallel computing constraints: smaller graphs provide limited concurrent exploration opportunities, while Amdahl's law dictates that sequential components become performance-limiting factors at reduced problem scales \cite{amdahl1967validity}.

The experimental validation confirms HiPerMotif's robust scalability for large-scale graphs with near $10^6$ edges, with efficiency gains maximized in contexts with extensive search spaces.


\section{Conclusion}
\label{sec:conclusion}
This paper presented HiPerMotif, a hybrid algorithm for subgraph isomorphism that addresses fundamental scalability limitations of existing approaches on large-scale, attribute-rich directed graphs. Through edge-centric initialization with state injection, structural reordering, and efficient validation mechanisms, HiPerMotif achieves significant performance improvements on large networks while enabling analysis of previously intractable datasets.

Our experimental evaluation demonstrates that HiPerMotif outperforms state-of-the-art algorithms on medium to large-scale networks, achieving speedups of up to 66× on real-world datasets where baselines complete execution, and 5× improvement through structural reordering alone. Notably, HiPerMotif successfully processes massive networks such as the H01 connectome that cause existing algorithms to fail due to memory constraints.
However, HiPerMotif shows performance overhead on small graphs due to parallel coordination costs and advanced data structure overhead. The algorithm is specifically designed for large-scale networks where these overheads are amortized by substantial parallelization benefits. This trade-off represents a conscious design choice prioritizing scalability over small-graph performance.

The broader impact extends beyond performance improvements, enabling research communities working with large-scale networks to tackle previously computationally prohibitive problems in computational neuroscience, social network analysis, and systems biology.

Future research directions include extending HiPerMotif to dynamic graphs, incorporating machine learning-guided edge selection, and developing adaptive mechanisms to automatically switch between HiPerMotif and traditional algorithms based on graph characteristics.

HiPerMotif is open source and is publicly available on GitHub at \url{https://github.com/Bears-R-Us/arkouda-njit} along with the execution scripts and detailed documentation.

\section*{Acknowledgment}
This research was funded in part by NSF grant numbers CCF-2109988,
OAC-2402560, and CCF-2453324.

\bibliographystyle{IEEEtran}
\bibliography{
    bibliography/prop_graph, 
    bibliography/arachne, 
    bibliography/motif,
    bibliography/sub_iso
}
\end{document}